\documentclass[11pt,reqno]{amsart}

\usepackage[T1]{fontenc}
\usepackage{lmodern}
\usepackage[expansion=false]{microtype}
\usepackage[a4paper,textwidth=150mm,textheight=242mm,centering,
  headheight=14pt,headsep=7mm,footskip=11mm]{geometry}
\usepackage{amsmath,amssymb,amsthm,mathtools}
\usepackage{hyperref}
\hypersetup{colorlinks=true,linkcolor=blue,citecolor=blue,urlcolor=blue,
  pdftitle={Wehrl-type entropy problem for compact connected semisimple Lie groups},
  pdfauthor={},
  pdfsubject={Haar measure formulation, real Husimi moments, and direct entropy variation}}
\usepackage{bookmark}
\usepackage{braket}
\allowdisplaybreaks[2]
\numberwithin{equation}{section}

\theoremstyle{plain}
\newtheorem{theorem}{Theorem}[section]
\newtheorem{proposition}[theorem]{Proposition}
\newtheorem{corollary}[theorem]{Corollary}
\newtheorem{lemma}[theorem]{Lemma}
\theoremstyle{remark}
\newtheorem{remark}[theorem]{Remark}

\newcommand{\C}{\mathbb C}

\newcommand{\R}{\mathbb R}

\newcommand{\gfrak}{\mathfrak g}
\newcommand{\rh}{\varrho}
\newcommand{\ii}{\mathrm i}
\newcommand{\dd}{\,\mathrm d}
\newcommand{\Id}{\mathrm I}
\newcommand{\tr}{\operatorname{Tr}}
\newcommand{\Ad}{\operatorname{Ad}}

\newcommand{\norm}[1]{\lVert#1\rVert}
\newcommand{\ip}[2]{\langle#1,#2\rangle}
\newcommand{\proj}[1]{\lvert#1\rangle\langle#1\rvert}
\newcommand{\SW}{S_{\mathrm W}}

\title[Wehrl-type entropy for compact semisimple groups]{Wehrl-type entropy problem for compact connected semisimple Lie groups}
\date{September 10, 2026}
\subjclass[2020]{Primary 22E46, 43A85; Secondary 81R30, 94A17}
\keywords{Wehrl entropy, coherent states, compact connected semisimple groups,
Casimir operator, Husimi function, highest weights}

\author{Haonan Zhang}
\address{Department of Mathematics, University of South Carolina, Columbia, SC 29208, USA}
\email{haonanzhangmath@gmail.com}
\date{\today}

\begin{document}
\raggedbottom
\begin{abstract}
This paper solves the Wehrl-type entropy problem for arbitrary compact connected semisimple Lie groups.
Let $G$ be a compact connected semisimple Lie group, and let $\pi:G\to U(V_\lambda)$ be a
finite-dimensional irreducible unitary representation associated with the highest weight $\lambda$.
We prove that coherent projectors are the unique minimizers of the Wehrl entropy
over all density matrices on $V_\lambda$. They also uniquely maximize every Husimi power moment of order $p>1$.
The proof uses a second-variation at the extremizer by perturbation in directions associated with Killing fields, similar to the strategy of Frank and Lieb used in \cite{FrankLieb}. Then the problem reduces to the extreme-moment property of highest-weight vectors.

\end{abstract}
\maketitle

\section{Introduction}\label{sec:intro}

Wehrl \cite{Wehrl,Wehrl79} defined a classical analog of the von Neumann entropy
$-\tr(\sigma\log\sigma)$ of density operators using coherent states. For a density operator $\sigma$ on $L^2(\R^n)$, consider the Husimi function
\begin{gather*}
 Q_\sigma(q,p)=\langle q,p|\sigma|q,p\rangle,
\end{gather*}
where $\ket{q,p}\in L^2(\R^n), q,p\in\R^n$ are the Glauber coherent states \cite{Glauber},
\[
 \ket{q,p}(x)=(\pi\hbar)^{-n/4}
 \exp\left(-\frac{|x-q|^2}{2\hbar}
       +\frac{\ii}{\hbar}p\cdot\left(x-\frac q2\right)\right),\qquad x\in \R^n.
       \]
Here $\hbar>0$ is the reduced Planck constant, and we may choose units in which $\hbar=1$. The Husimi function is a probability density on phase space $(\R^{2n},\mathrm d q\,\mathrm d p/(2\pi\hbar)^n)$, and the Wehrl entropy is defined as
\[
 \SW(\sigma)=-\int_{\R^{2n}}Q_\sigma\log Q_\sigma\,
                  \frac{\mathrm d q\,\mathrm d p}{(2\pi\hbar)^n}.
\]
Throughout, we use $0\log0=0$ and take inner products to be conjugate-linear
in the first argument and linear in the second.
Wehrl conjectured \cite{Wehrl79} that coherent
projectors minimize this entropy, and this was solved by Lieb \cite{Lieb}
using sharp Hausdorff--Young and Young convolution inequalities. In fact, he proved the analogous results
 for power functions $x^p, p>1$, and then derived the entropy result by differentiating at $p=1$.
Carlen \cite{Carlen} gave another proof using logarithmic Sobolev inequalities, and his proof works for power functions and $x\log x$.
He also obtained the equality condition for the first time.
Luo \cite{Luo} gave an alternative proof of the original Wehrl entropy inequality based on hypercontractivity.

The above problem has a group-theoretic formulation, and coherent states can be defined
for more general Lie groups \cite{Perelomov}. Wehrl's original problem concerns the Heisenberg group, and it is natural to ask if the same extremal phenomenon occurs for more Lie groups. In particular, Lieb \cite{Lieb} conjectured the same result for the Bloch
coherent states associated with $SU(2)$. After a number of partial results \cite{Bodmann,Gnutzmann,Schupp}, this was solved by Lieb and Solovej
\cite{LS2} by determining the minimum output entropy of a class of quantum channels and a semiclassical limit argument.
Their proof applies to general convex functions, including $x^p$ for $p>1$ and $x\log x$.
The same paper also established the general convex-function inequality for Glauber coherent states by a limiting argument.
Later, the same authors \cite{LS} extended the results to symmetric representations of $SU(N)$.

Lieb and Solovej \cite{LS11} also studied the Wehrl-type entropy problem
for the holomorphic discrete series of $SU(1,1)$; see also \cite{Bandyopadhyay,Bayart}. The corresponding
Wehrl-type inequality was proved for general convex functions by Kulikov
\cite{Kulikov} using the hyperbolic isoperimetric inequality for level
sets of weighted holomorphic functions.
Kulikov also characterized equality for strictly convex functions.
In related work, Nicola and Tilli \cite{NT} proved the Faber--Krahn inequality for the short-time Fourier
transform with Gaussian window.
Frank \cite{Frank} and, independently,
Kulikov--Nicola--Ortega-Cerd\`a--Tilli \cite{KNOT}
gave alternative proofs and characterized the pure-state optimizers for
continuous nonaffine convex functions in the Heisenberg, $SU(2)$, and $SU(1,1)$
settings, when the sharp bound is finite.
Quantitative stability was established for Heisenberg coherent states by
Frank--Nicola--Tilli \cite{FNT}, and equality and stability for symmetric
$SU(N)$ coherent states by Nicola--Riccardi--Tilli \cite{NRT};
see also their alternative stability proof \cite{NRTsecond}.
We refer to the references in these works for more relevant results.

For arbitrary compact connected semisimple groups, Sugita \cite{Sugita,SugitaMoments} proved the
sharp inequalities for integer powers $p\ge2$ by tensor-product methods.
See also van Haastrecht and Zhang \cite[Appendix~A]{HZ} for a detailed
Casimir proof of the Cartan-power equality characterization.

\medskip

In this paper, we provide a solution to the Wehrl-type entropy problem for
arbitrary compact connected semisimple Lie groups. To state the results, let us recall the formulation of the problem first.

Let $G$ be a compact connected semisimple Lie group, and let
$\pi:G\to U(V_\lambda)$ be a finite-dimensional irreducible unitary
representation corresponding to the highest weight $\lambda$.  Write $d=\dim V_\lambda$,
fix a unit highest-weight vector $v_\lambda\in V_\lambda$, and let $\mathrm d g$ be
normalized Haar measure on $G$ so that $\int_G\mathrm d g=1$.
For $g\in G$, set $ v_g=\pi(g)v_\lambda,$ and consider the coherent projectors
\[
 P_g=\proj{v_g}=\pi(g)\proj{v_\lambda}\pi(g)^*.
\]
They satisfy the resolution of identity
\[
 \int_G P_g\dd g=\frac1d\Id.
 \]
 Then for any density matrix $\sigma$ on $V_\lambda$, the Husimi function
 \begin{equation}\label{eq:husimi}
 Q_\sigma(g)=\tr(\sigma P_g)=\ip{v_g}{\sigma v_g}
\end{equation}
is a probability density function on $(G, d\cdot \dd g)$. We define the Wehrl-type entropy by
\begin{equation}
 \SW(\sigma):=-d\int_G Q_\sigma(g)\log Q_\sigma(g)\dd g.
\end{equation}
When $\sigma$ is a pure state $\proj v$,
we write $Q_v=Q_{\proj v}$ and $\SW(v)=\SW(\proj v)$, so that $Q_v(g)=|\ip{v_g}{v}|^2$. The Wehrl-type entropy conjecture asserts that
the minimum of $\SW(\sigma)$ over all density matrices $\sigma$ on $V_\lambda$ is attained at a coherent state $\proj{v_g}$ for some (and for all) $g\in G$.
The generalized problem concerns functions beyond $-x\log x$, as well as the equality conditions.

Our main results are as follows.

\begin{theorem}\label{thm:main}
Let $\Phi:[0,1]\to\R$ be either $\Phi(x)=x^p$ for a fixed real $p>1$,
or $\Phi(x)=x\log x$. For every unit vector $v\in V_\lambda$,
\begin{equation}\label{eq:functional}
 \int_G\Phi(|\langle v_g,v\rangle|^2)\dd g
 \le \int_G\Phi(|\langle v_g,v_\lambda\rangle|^2)\dd g.
\end{equation}
For either choice of $\Phi$, equality holds if and only if
$v=e^{\ii\theta}v_h$ for some $\theta\in\R$ and $h\in G$.
\end{theorem}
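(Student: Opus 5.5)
We follow the Frank--Lieb second-variation strategy indicated in the abstract. Set $F(v)=\int_G\Phi(|\ip{v_g}{v}|^2)\dd g$ on the unit sphere of $V_\lambda$. Since the sphere is compact and $F$ is continuous, a maximizer $w$ exists. By invariance of Haar measure, $F(\pi(h)v)=F(v)$ for all $h\in G$, and $F(e^{\ii\theta}v)=F(v)$. The plan is to show that every maximizer is a coherent vector. This gives both the inequality \eqref{eq:functional} and the equality statement: if $v$ attains the value $F(v_\lambda)$ but is not coherent, then $v$ is itself a maximizer, which contradicts the claim.

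\textbf{Step 1: second variation along Killing directions.} Fix a maximizer $w$. For $X\in\gfrak$, write $A=\mathrm d\pi(X)$, which is skew-Hermitian. Perturb $w$ by $w_t=e^{tB}w/\norm{e^{tB}w}$, where $B$ runs over the complexified operators $A$ and $\ii A$. The $G$-directions $A$ give zero variation, because $F$ is constant along orbits. The directions $\ii A$ are the nontrivial ones.

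The key identity is that the Husimi function $u(g)=|\ip{v_g}{w}|^2$ transforms under these perturbations by differential operators on $G$. Indeed, $\ip{v_g}{\mathrm d\pi(X)w}=-\ip{\mathrm d\pi(X)v_g}{w}$ is a right-invariant derivative $-\tilde X_g$ of $g\mapsto\ip{v_g}{w}$. Hence both $\frac{\mathrm d}{\mathrm dt}$ and $\frac{\mathrm d^2}{\mathrm dt^2}$ of $F(w_t)$ at $t=0$ can be written as integrals of $\Phi'(u)$ and $\Phi''(u)$ against the functions $\tilde X\ip{v_g}{w}$ and $\tilde X^2\ip{v_g}{w}$, with correction terms from the normalization involving $\ip{w}{A w}$ and $\norm{Aw}^2$.

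Summing the second-variation inequalities $\frac{\mathrm d^2}{\mathrm dt^2}F(w_t)\le0$ over an orthonormal basis $X_j$ of $\gfrak$ for the negative Killing form makes the Casimir $C=-\sum_j\mathrm d\pi(X_j)^2$ appear. $C$ acts as the scalar $c_\lambda=\ip{\lambda}{\lambda+2\rho}$ on $V_\lambda$, and it also appears as the Laplacian $\Delta_G=\sum\tilde X_j^2$ on $G$. Integrating by parts on $G$, the $\Phi''$ terms should combine with the $\Phi'$ terms, via $\Delta_G$ acting on the matrix coefficients, into an inequality of the form
\[
 \sum_j\bigl(\norm{\mathrm d\pi(X_j)w}^2-|\ip{w}{\mathrm d\pi(X_j)w}|^2\bigr)\cdot K_\Phi(u)\ \ge\ (\text{positive})\cdot\bigl(c_\lambda-\textstyle\sum_j|\ip{w}{\mathrm d\pi(X_j)w}|^2\bigr)\cdot(\cdots).
\]
The intended consequence is the variance bound
\[
 \sum_j|\ip{w}{\mathrm d\pi(X_j)w}|^2\ \ge\ \ip{\lambda}{\lambda},
\]
that is, the moment map $\mu(w)=(\ii\ip{w}{\mathrm d\pi(X_j)w})_j$ has norm at least $|\lambda|$.

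\textbf{Step 2: extreme-moment property.} We next use the classical fact (Kostant, Delbourgo--Fox) that $|\mu(w)|\le|\lambda|$ for all unit $w$, with equality if and only if $w$ lies in the $G$-orbit of $\mathbb C v_\lambda$. To prove it, conjugate $\mu(w)$ into the closed positive Weyl chamber, say to $\Ad(h)\mu(w)=\eta$, and replace $w$ by $w'=\pi(h)w$. Then $|\eta|^2=\ip{w'}{\mathrm d\pi(H_\eta)w'}$ up to a factor, and this is a convex combination of the weights $\ip{\nu}{\eta}$ occurring in $w'$. Each such weight satisfies $\ip{\nu}{\eta}\le\ip{\lambda}{\eta}\le|\lambda||\eta|$. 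Equality forces $w'$ to be supported on weights $\nu$ with $\ip{\nu}{\eta}=\ip{\lambda}{\eta}$ and forces $\eta\parallel\lambda$. The only weight of $V_\lambda$ with $\ip{\nu}{\lambda}=|\lambda|^2$ is $\lambda$ itself, since $|\nu|\le|\lambda|$ with equality only on the Weyl orbit. Hence $w'\in\C v_\lambda$. Combining this with Step 1 shows that $w$ is coherent.

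\textbf{Main obstacle.} The hard step is Step 1: turning the second variation into exactly the moment inequality, with the correct constant $\ip{\lambda}{\lambda}$, for both $\Phi=x^p$ and $\Phi=x\log x$. Here one must use the first-order condition $\int\Phi'(u)\,\ip{v_g}{w}\overline{\ip{v_g}{z}}\dd g=\kappa\ip{w}{z}$ (Lagrange multiplier) to evaluate the $\Phi'$ contributions. One must also exploit that $v_g$ is a highest-weight orbit, so that $\Delta_G$ acts on $g\mapsto\ip{v_g}{w}$ by $-c_\lambda$, while the first-order operators $\tilde X_j$ generate only the $\mathfrak n^-$-directions. I would first carry out this computation at the model point $w=v_\lambda$ to fix the constants, and check that the resulting inequality is an equality there. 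I would then handle the $\Phi''$ term with the identity $\sum_j|\tilde X_j\ip{v_g}{w}|^2=\tfrac12\Delta_G u-\mathrm{Re}(\overline{\ip{v_g}{w}}\,\Delta_G\ip{v_g}{w})$ and integrate by parts. For $x\log x$ this gives $\int\Phi''(u)|\nabla u|^2=\int|\nabla u|^2/u$, which needs justification where $u$ vanishes; I would handle this by regularizing with $\Phi_\varepsilon(x)=(x+\varepsilon)\log(x+\varepsilon)$.
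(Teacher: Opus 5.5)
Your architecture matches the paper's. You take a maximizer $w$, perturb it along the Hermitian generators $T_j=-\ii\,\mathrm d\pi(X_j)$, sum the second variations over an orthonormal basis for the negative Killing form, and close with the extreme-moment lemma. Your exponential curves $e^{tT_j}w$ work as well as the paper's affine curves $w+tT_jw$: after summing, the acceleration terms add $2(\lambda,\lambda+2\rh)\int_G u\Phi'(u)\dd g$ to the numerator, and this cancels against the doubled second derivative of the norm. Step~2 is correct, and your Weyl-chamber argument is a valid alternative to the convexity proof of Lemma~\ref{lem:moment}.

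The gap is Step~1, which carries the entire proof and which you leave as a schematic inequality with an unspecified $K_\Phi$ and $(\cdots)$. The missing ingredient is the complex-square identity
\[
 \sum_j\ip{v_g}{T_jw}^{\,2}=\norm{\lambda}^2\,\ip{v_g}{w}^{\,2}
 \qquad(\text{complex squares, not moduli}).
\]
It holds because $v_g\otimes v_g$ lies in the Cartan component $V_{2\lambda}\subset V_\lambda^{\otimes2}$, on which $\Omega=\sum_jT_j\otimes T_j$ acts by $\tfrac12\bigl((2\lambda,2\lambda+2\rh)-2(\lambda,\lambda+2\rh)\bigr)=\norm{\lambda}^2$ (Lemma~\ref{lem:casimir}). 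This identity is where the highest-weight nature of $v_\lambda$ enters. The relation $\Delta_Ga=-(\lambda,\lambda+2\rh)a$ does not use it, since it holds for every matrix coefficient of $V_\lambda$. Your $\mathfrak n^-$ remark gestures at the identity: at $g=e$, the two real root directions attached to each positive root give tensor squares $T_jv_\lambda\otimes T_jv_\lambda$ that cancel in pairs because $\mathfrak n^+$ kills $v_\lambda$. But you never formulate or use it.

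To see why Step~1 cannot be completed without it, write $a=\ip{v_g}{w}$, $B_j=\ip{v_g}{T_jw}$ and $u=|a|^2$. For the affine curves, your integration by parts (equivalently, the paper's vanishing second variation along the unitary curves $e^{\ii tT_j}w$) uses $\sum_j|B_j|^2=\tfrac12\Delta_Gu+(\lambda,\lambda+2\rh)u$ and $|\nabla u|^2=4\sum_j(\Im(\bar aB_j))^2$. This reduces the summed second variation of the numerator to
\[
 2(\lambda,\lambda+2\rh)\int_G u\Phi'(u)\dd g
 +4\int_G\Phi''(u)\,\Re\Bigl(\bar a^{\,2}\sum_jB_j^2\Bigr)\dd g .
\]
Nothing in your sketch evaluates, or even bounds from below, the last integral. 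Without it, neither the constant $\norm{\lambda}^2$ nor the direction of the final comparison follows.

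With the identity, the last integral equals $4\norm{\lambda}^2\int_Gu^2\Phi''(u)\dd g$. The first term cancels exactly against the $\norm{T_jw}^2$ contributions of the normalization, which sum to $(\lambda,\lambda+2\rh)$. The summed second variation of $F$ on the sphere at $w$ then becomes $4p(p-1)M\bigl(\norm{\lambda}^2-\norm{m(w)}^2\bigr)$ for $x^p$ (with $M=F(w)$), and $\tfrac4d\bigl(\norm{\lambda}^2-\norm{m(w)}^2\bigr)$ for $x\log x$. So the inequality you display, weighted by $\sum_j\bigl(\norm{\mathrm d\pi(X_j)w}^2-|\ip{w}{\mathrm d\pi(X_j)w}|^2\bigr)$, is not the shape that arises.

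Finally, the Lagrange condition $Kw=\kappa w$ with $K=\int_G\Phi'(u)P_g\dd g$ cannot evaluate the $\Phi'$ terms. The quantity $\sum_j\int_G\Phi'(u)|B_j|^2\dd g=\sum_j\ip{T_jw}{KT_jw}$ is not determined by $Kw$. The Lagrange condition only removes the first-order cross terms coming from the normalization, as in the paper's logarithmic case.
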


It is well-known that when $\Phi$ is convex, the inequality for
general density matrices can be reduced to pure states,
and the same reduction works for the equality condition if $\Phi$ is strictly convex.
See for example \cite[Proof of Corollary~14]{Frank}.

\begin{corollary}\label{cor:density}
For either function $\Phi$ in Theorem~\ref{thm:main} and every density
matrix $\sigma$ on $V_\lambda$,
\begin{equation}\label{eq:densityfunctional}
 \int_G\Phi(Q_\sigma(g))\dd g
 \le \int_G\Phi(Q_{v_\lambda}(g))\dd g.
\end{equation}
Equality holds if and only if $\sigma=P_h$ for some $h\in G$.
In particular,
\begin{equation}\label{eq:entropy}
 \SW(\sigma)\ge\SW(\proj{v_\lambda}).
\end{equation}
\end{corollary}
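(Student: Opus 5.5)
We deduce Corollary~\ref{cor:density} from Theorem~\ref{thm:main} by the standard convexity reduction of \cite[Proof of Corollary~14]{Frank}.

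\textbf{Step 1: reduction to pure states.} Write the spectral decomposition $\sigma=\sum_{j}\mu_j\proj{u_j}$ with $\mu_j\ge0$, $\sum_j\mu_j=1$ and $u_j$ orthonormal. Then $Q_\sigma(g)=\sum_j\mu_jQ_{u_j}(g)$ pointwise, and each value lies in $[0,1]$. Both choices of $\Phi$ are convex on $[0,1]$, so Jensen's inequality gives $\Phi(Q_\sigma(g))\le\sum_j\mu_j\Phi(Q_{u_j}(g))$ for every $g$. Integrating over $G$ and applying \eqref{eq:functional} to each $u_j$ yields
\[
\int_G\Phi(Q_\sigma)\dd g\le\sum_j\mu_j\int_G\Phi(Q_{u_j})\dd g\le\int_G\Phi(Q_{v_\lambda})\dd g,
\]
which is \eqref{eq:densityfunctional}. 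The integrals are finite because $\Phi$ is bounded on $[0,1]$ (with $0\log0=0$).

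\textbf{Step 2: equality.} Suppose equality holds, and restrict to indices with $\mu_j>0$. Then each $u_j$ attains equality in \eqref{eq:functional}, so by Theorem~\ref{thm:main} we have $u_j=e^{\ii\theta_j}v_{h_j}$. Equality must also hold in Jensen's inequality for almost every $g$. Since both functions are continuous in $g$, it then holds for every $g$. Because $\Phi$ is strictly convex on $[0,1]$, this forces $Q_{u_j}(g)=Q_{u_k}(g)$ for all $g$ and all $j,k$ with positive weight.

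It remains to show that the $Q_{u_j}$ cannot all coincide when there are two or more orthonormal $u_j$; this is the one genuinely nontrivial point. Let $j\ne k$. Evaluate at $g=h_j$: here $Q_{u_j}(h_j)=|\ip{v_{h_j}}{v_{h_j}}|^2=1$, while $Q_{u_k}(h_j)=|\ip{v_{h_j}}{u_k}|^2=|\ip{u_j}{u_k}|^2=0$ by orthogonality. This is a contradiction, so exactly one $\mu_j$ is nonzero, and therefore $\sigma=\proj{v_{h}}=P_{h}$. Conversely, $Q_{P_h}(g)=Q_{v_\lambda}(h^{-1}g)$, so left invariance of Haar measure gives equality for every $h$.

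\textbf{Step 3: entropy.} Take $\Phi(x)=x\log x$. Then $\SW(\sigma)=-d\int_G\Phi(Q_\sigma)\dd g$, and multiplying \eqref{eq:densityfunctional} by $-d<0$ reverses the inequality, giving \eqref{eq:entropy}.

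The only delicate part is the strict-convexity step, which needs continuity of the Husimi functions to pass from almost-everywhere to everywhere equality. All of the real difficulty lies in Theorem~\ref{thm:main}, which we assume.
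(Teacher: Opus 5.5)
Your proof is correct and matches the paper, which gives no separate argument and simply invokes the standard convexity reduction of \cite[Proof of Corollary~14]{Frank}: Jensen plus Theorem~\ref{thm:main} for the inequality, then strict convexity and the orthogonality of the coherent eigenvectors, checked at $g=h_j$, for the equality case. Your passage from almost-everywhere to everywhere equality is valid because Haar measure charges every nonempty open set.
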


The idea of the proof is to analyze an optimizer under perturbations in
directions associated with Killing fields.
Suppose that $w_*$ is a unit vector maximizing
$\int_G\Phi(Q_v(g))\dd g$, with value $M$, for $\Phi(x)=x^p$, $p>1$.
Let $T_j=-\ii\,\mathrm d\pi(X_j)$, where $(X_j)_j$ is an orthonormal
basis for the negative Killing form on $\gfrak$. We use
\[
 \gamma_j(t)=\frac{w_*+tT_jw_*}{\norm{w_*+tT_jw_*}},\qquad |t|\text{ small}.
\]
Homogeneity reduces the problem to the scalar functions
\[
 \phi_j(t)=\int_G|\langle v_g,w_*+tT_jw_*\rangle|^{2p}\dd g
              -M\norm{w_*+tT_jw_*}^{2p}.
\]
Using some identities for Casimir operators, we obtain
\[
 \sum_j\phi_j''(0)=4p(p-1)M
 \left(\|\lambda\|^2-\sum_j\langle w_*,T_jw_*\rangle^2\right)\ge 0.
\]
The final inequality follows from the extreme-moment bound in Lemma~\ref{lem:moment}. The choice of $w_*$ also implies
$\phi_j''(0)\le0$ for each $j$, so
$\|\lambda\|^2=\sum_j\langle w_*,T_jw_*\rangle^2$.
This equality holds only if $w_*$ is coherent.

\medskip

The proof strategy is similar to the one used by Frank and Lieb in
\cite{FrankLieb} to prove sharp Hardy--Littlewood--Sobolev inequalities
on the Heisenberg group: starting from an optimizer, they examine second
variations in a similar way, then sum the resulting stability inequalities
and combine them with a spectral comparison to force the optimizer to
have the desired form. The author thanks Rupert Frank for pointing this out.

\medskip

Section~\ref{sec:prelim} recalls the Lie-theoretic preliminaries,
the extreme-moment property, and the Casimir identities.
Section~\ref{sec:variations} collects some technical computations in the second variation argument.
In Section~\ref{sec:proof} we prove the main results.
Some technical details are deferred to Appendices~\ref{app:regularity} and~\ref{app:representation}.

\subsection*{Acknowledgments} The author is partially supported by NSF grant DMS-2453408 and acknowledges the use of OpenAI’s GPT-5.6 and GPT-6. He is grateful to Rupert Frank for valuable comments on an earlier version of the paper, and to Paata Ivanisvili for helpful discussion. The author thanks Jan Philip Solovej for hosting him during a visit to Copenhagen in 2022 and for helpful discussion. Part of the work was finished during a visit to Texas, and he is grateful to Tao Mei and Ping Zhong for kind hospitality.

\section{Preliminaries on Lie groups and Lie algebras}\label{sec:prelim}

For background on highest-weight representations, see \cite{Humphreys,Knapp}.
For the trivial representation, $d=1$ and $Q_\sigma(g)=1$ for every
$g\in G$, so every moment is one and the entropy is zero.
Theorem~\ref{thm:main} is immediate in this case.  We assume $\lambda\ne0$ below.

\subsection{Basic notations}
Let $G$ be a compact semisimple connected Lie group and $\gfrak$ its Lie algebra.
Consider the inner product defined by the negative Killing form
\begin{equation}\label{eq:killingB}
       B(X,Y)=-\operatorname{Kill}(X,Y)
 =-\operatorname{Tr}
       (\operatorname{ad}X\operatorname{ad}Y).
\end{equation}
Since $\gfrak$ is compact and semisimple, $B$ is positive definite and
$\Ad(G)$-invariant.  We use this Killing-form normalization on every
simple factor.  For $X\in\gfrak$, define the self-adjoint operator
\[
 T(X)=-\ii\,\mathrm d\pi(X).
\]
Choose a $B$-orthonormal basis $X_1,\ldots,X_s$ of $\gfrak$ and consider the self-adjoint operators
\[
 T_j=T(X_j),\qquad 1\le j\le s.
\]
Write $h_j(t)=\exp_G(-tX_j)$, $t\in\R$, for the one-parameter subgroup
of $G$ generated by $-X_j$; it acts as
$\pi(h_j(t))=e^{-\ii tT_j}$.

Let $\mathfrak t\subset\gfrak$ be a Cartan subalgebra.
We use real weights $\nu\in\mathfrak t^*$, with the convention
\[
 T(H)w=\nu(H)w,\qquad H\in\mathfrak t,\quad w\in V_\lambda[\nu].
\]
Roots use the same real convention in the adjoint representation.
The inner product $B$ identifies $\mathfrak t^*$ with $\mathfrak t$;
we use the same symbol for a weight and its $B$-dual vector.
With a choice of positive roots, write
\[
 \rh=\frac12\sum_{\alpha>0}\alpha.
\]
Parentheses denote the induced inner product on real weights.

According to Schur's lemma, one has
\begin{equation}\label{eq:normalization}
 \int_G P_g\dd g=\frac1d\Id,\qquad
 \int_G Q_\sigma(g)\dd g=\frac1d.
\end{equation}

\subsection{The extreme-moment property}
The moment of a density matrix $\sigma$ is
\[
 m(\sigma)=\sum_{j=1}^s\tr(\sigma T_j)X_j.
\]
This definition is independent of the orthonormal basis and is
equivariant under $G$; it is the moment map given by expectation
values in our real Lie-algebra convention \cite[Section~2.2]{BCV}.
For a unit vector $v$, write $m(v)=m(\proj v)$, so that
\begin{equation}\label{eq:moment}
 \norm{m(v)}^2=\sum_j\ip v{T_jv}^2.
\end{equation}
The highest-weight projector is
fixed by the torus with Lie algebra $\mathfrak t$, so its moment belongs to $\mathfrak t$.
Its Cartan coordinates are those of the highest weight, giving
$m(v_\lambda)=\lambda$.

For pure states, the following lemma is the minimum-uncertainty
characterization of Delbourgo and Fox \cite{DelbourgoFox}; see also
\cite[Lemma~2]{Sugita}.
The lemma is a consequence of the standard highest-weight
estimates \cite[Theorem~5.5(b),(e), pp.~279--280]{Knapp}; we include
a proof for completeness.

\begin{lemma}[Extreme moments]\label{lem:moment}
For every density matrix $\sigma$,
\begin{equation}\label{eq:momentbound}
 \norm{m(\sigma)}^2\le\norm{\lambda}^2.
\end{equation}
Equality holds exactly for coherent projectors.
 In particular,
 $\norm{m(v)}^2=\norm{\lambda}^2$ if and only if
 $v=e^{\ii\theta}\pi(h)v_\lambda$ for some $\theta\in\R$ and $h\in G$.
\end{lemma}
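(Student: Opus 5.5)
The plan is to reduce to a single Cartan direction by $G$-equivariance of the moment map, and then use the weight decomposition together with the standard fact that every weight $\nu$ of $V_\lambda$ satisfies $(\nu,\nu)\le(\lambda,\lambda)$, with equality only for the extremal weights $w\lambda$, $w$ in the Weyl group.

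Fix a density matrix $\sigma$ and set $X=m(\sigma)\in\gfrak$. Since every element of $\gfrak$ is $\Ad(G)$-conjugate into $\mathfrak t$, and since $m(\pi(h)\sigma\pi(h)^*)=\Ad(h)m(\sigma)$ with $B$ invariant under $\Ad(G)$, we may replace $\sigma$ by a conjugate and assume $X=H\in\mathfrak t$. By the definition of $m$ and the orthonormality of the $X_j$,
\[
 \norm{m(\sigma)}^2=B(X,X)=\tr(\sigma T(X)).
\]
With $X=H$ we get
\[
 \norm{H}^2=\tr(\sigma T(H))=\sum_\nu \tr(\sigma E_\nu)\,\nu(H)=\sum_\nu p_\nu\,(\nu,H),
\]
where $E_\nu$ is the projection onto $V_\lambda[\nu]$ and $p_\nu=\tr(\sigma E_\nu)$ is a probability vector. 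Cauchy--Schwarz together with the weight bound gives $(\nu,H)\le\norm{\nu}\,\norm{H}\le\norm\lambda\,\norm H$. Hence $\norm H^2\le\norm\lambda\,\norm H$, which is \eqref{eq:momentbound}.

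For the equality case, suppose $\norm H=\norm\lambda\neq0$. Every weight $\nu$ with $p_\nu>0$ must then satisfy $(\nu,H)=\norm\lambda\,\norm H$. This forces $\norm\nu=\norm\lambda$ and $\nu=H$. So $\sigma$ is supported on the single weight space $V_\lambda[H]$, and $H$ is an extremal weight $w\lambda$. Extremal weight spaces are one-dimensional, since they are Weyl conjugates of $V_\lambda[\lambda]=\C v_\lambda$ via representatives $\dot w\in N_G(T)$. Therefore $\sigma=\proj{\pi(\dot w)v_\lambda}$, which is a coherent projector. Undoing the initial conjugation keeps it coherent. Conversely, $m(P_h)=\Ad(h)\lambda$ has norm $\norm\lambda$. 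The pure-state statement follows directly from this, because $\proj v=\proj{v_h}$ means $v=e^{\ii\theta}v_h$.

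The main obstacle is the weight inequality $\norm\nu\le\norm\lambda$ with equality only on the Weyl orbit of $\lambda$. I would cite \cite[Theorem~5.5]{Knapp}, or prove it as follows. Conjugate $\nu$ by the Weyl group into the dominant chamber. Then $\lambda-\nu=\sum c_\alpha\alpha$ with $c_\alpha\ge0$ over simple roots $\alpha$, so
\[
 \norm\lambda^2-\norm\nu^2=(\lambda-\nu,\lambda+\nu)\ge0,
\]
because $\lambda+\nu$ is dominant. Equality forces $(\lambda-\nu,\lambda+\nu)=0$, and a short argument using that the support of $\lambda-\nu$ must then be orthogonal to $\lambda$ and $\nu$ gives $\nu=\lambda$. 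One also needs to check that the real-weight conventions match the $B$-identification used in $(\nu,H)$.
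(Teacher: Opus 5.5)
Your proof is correct and follows the paper's argument. You conjugate the moment into $\mathfrak t$, express it through the weight-space probabilities $p_\nu=\tr(\sigma\Pi_\nu)$, and use $\norm\nu\le\norm\lambda$ (with equality only on $W\lambda$) together with the one-dimensionality of the extremal weight spaces; your termwise Cauchy--Schwarz bound on $\norm H^2=\sum_\nu p_\nu(\nu,H)$ is equivalent to the paper's convexity step, since $H=m(\sigma)=\sum_\nu p_\nu\nu$.
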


\begin{proof}
Decompose $V_\lambda=\bigoplus_\nu V_\lambda[\nu]$ into its orthogonal weight
spaces, and let $\Pi_\nu$ be the projection onto $V_\lambda[\nu]$.
Writing $W$ for the Weyl group, the standard weight facts give
$\norm\nu^2\le\norm{\lambda}^2$, with equality exactly for $\nu\in W\lambda$.
For every $\nu\in W\lambda$, the weight space $V_\lambda[\nu]$ is
one-dimensional.

By equivariance, we may conjugate $\sigma$ so that $m(\sigma)\in\mathfrak t$.
Choose an orthonormal basis $H_1,\ldots,H_r$ of $\mathfrak t$.
Since the moment definition is independent of the orthonormal basis,
the coordinates of this conjugated moment are $\tr(\sigma T(H_\ell))$.

 Then
\[
 T(H_\ell)=\sum_\nu\nu(H_\ell)\Pi_\nu,\qquad 1\le\ell \le r,
\]
and we may write
\begin{align*}
 m(\sigma)
 &=\sum_{\ell=1}^r\tr(\sigma T(H_\ell))H_\ell\\
 &=\sum_{\ell=1}^r\sum_\nu
       \tr(\sigma\Pi_\nu)\nu(H_\ell)H_\ell\\
 &=\sum_\nu\tr(\sigma\Pi_\nu)
       \left(\sum_{\ell=1}^r\nu(H_\ell)H_\ell\right)
  =\sum_\nu\tr(\sigma\Pi_\nu)\nu.
\end{align*}
The last equality uses the identification of weights with vectors.
Set $p_\nu=\tr(\sigma\Pi_\nu)\ge0$; since
$\sum_\nu\Pi_\nu=\Id$, we have $\sum_\nu p_\nu=\tr\sigma=1$.
Thus $m(\sigma)$ is the convex combination
\[
 m(\sigma)=\sum_\nu p_\nu\nu,
\]
of weights, and consequently
\[
 \norm{m(\sigma)}^2
 \le\sum_\nu p_\nu\norm\nu^2\le\norm{\lambda}^2.
\]
Equality in the first inequality forces all weights with $p_\nu>0$
to equal one $\nu_0$, by strict convexity; equality in the second gives
$\nu_0\in W\lambda$.  Since $\sigma\ge0$ and
$\tr(\sigma(\Id-\Pi_{\nu_0}))=0$, the range of $\sigma^{1/2}$, and hence
of $\sigma$, lies in $V_\lambda[\nu_0]$.  Thus $\sigma$ is the projector onto
that coherent line.  Conversely, equivariance and
$m(v_\lambda)=\lambda$ give equality for every coherent projector.
\end{proof}

\subsection{Casimir identities}

We collect some identities of Casimir operators that will be used later.
We refer to \cite{Sugita} for related discussion.

\begin{lemma}\label{lem:casimir}
For every $g\in G$ and every $v\in V_\lambda$, we have
\begin{equation}\label{eq:casimir}
 \sum_j T_j^2=(\lambda,\lambda+2\rh)\Id,
\end{equation}
and
\begin{equation}\label{eq:square}
 \sum_j\ip{v_g}{T_jv}^{\,2}=\norm{\lambda}^2\,\ip{v_g}{v}^{\,2}.
\end{equation}
The squares in \eqref{eq:square} are complex squares, not modulus squares.
\end{lemma}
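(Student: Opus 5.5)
The first identity \eqref{eq:casimir} is the standard statement that the Casimir element acts on $V_\lambda$ as a scalar. The second identity \eqref{eq:square} is a statement about $v_\lambda$ alone, which we transport to all $g$ by equivariance and then reduce to a computation in a root basis.

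\emph{Proof of \eqref{eq:casimir}.} The element $\Omega=\sum_j X_j^2$ in the universal enveloping algebra is independent of the $B$-orthonormal basis and is $\Ad(G)$-invariant, so it is central. Hence $\sum_jT_j^2=-\mathrm d\pi(\Omega)$ commutes with $\pi(G)$, and by Schur's lemma it is a scalar. To find the scalar, we evaluate on $v_\lambda$ using a basis adapted to the root decomposition. For this we pick an orthonormal basis $H_1,\dots,H_r$ of $\mathfrak t$, and for each positive root $\alpha$ a pair of real root vectors whose complex combinations $E_{\pm\alpha}$ act as raising and lowering operators. Then $\sum_jT_j^2$ equals $\sum_\ell T(H_\ell)^2$ plus a sum over $\alpha>0$ of normalized terms $E_\alpha E_{-\alpha}+E_{-\alpha}E_\alpha$. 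On $v_\lambda$ we have $E_\alpha v_\lambda=0$, so $E_{-\alpha}E_\alpha v_\lambda=0$ and $E_\alpha E_{-\alpha}v_\lambda=[E_\alpha,E_{-\alpha}]v_\lambda=(\lambda,\alpha)v_\lambda$ with the Killing normalization. Summing gives $\|\lambda\|^2+\sum_{\alpha>0}(\lambda,\alpha)=(\lambda,\lambda+2\rh)$. The only delicate point is tracking constants and signs in our real convention $T=-\ii\,\mathrm d\pi$; I would fix this on $\mathfrak{su}(2)$ first.

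\emph{Proof of \eqref{eq:square}.} Write $v_g=\pi(g)v_\lambda$. Using $\pi(g)^*T(X)\pi(g)=T(\Ad(g^{-1})X)$ and the fact that $\{\Ad(g^{-1})X_j\}$ is again $B$-orthonormal, the left side of \eqref{eq:square} is basis-independent. It therefore equals $\sum_j\ip{v_\lambda}{T_j u}^2$ with $u=\pi(g)^*v$. Since $v$ is arbitrary, it suffices to treat $g=e$. The map $u\mapsto\sum_j\ip{v_\lambda}{T_ju}^2$ is a complex quadratic form, namely $\sum_j(\ip{T_jv_\lambda}{u})^2$, so we need to identify the vectors $T_jv_\lambda$.

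In the root basis, the complexified sum $\sum_j a_j\otimes b_j$ with $\sum_jT_j\otimes T_j=\sum_\ell T(H_\ell)\otimes T(H_\ell)+\sum_{\alpha>0}c_\alpha(E_\alpha\otimes E_{-\alpha}+E_{-\alpha}\otimes E_\alpha)$ is again basis-independent; it is the symmetric tensor of $B^{-1}$. So
\[
\sum_j\ip{T_jv_\lambda}{u}^2=\sum_\ell\ip{T(H_\ell)v_\lambda}{u}^2+\sum_{\alpha>0}2c_\alpha\ip{E_\alpha^*v_\lambda}{u}\ip{E_{-\alpha}^*v_\lambda}{u}.
\]
Here $E_{\pm\alpha}^*$ is proportional to $E_{\mp\alpha}$, and $E_\alpha v_\lambda=0$, so each cross term vanishes. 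The remaining terms give $\sum_\ell\lambda(H_\ell)^2\ip{v_\lambda}{u}^2=\|\lambda\|^2\ip{v_\lambda}{u}^2$, which is exactly \eqref{eq:square}.

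The main obstacle is justifying this bilinear rewriting. The sum $\sum_j x_j y_j$ over the real basis coefficients $x_j=\ip{T_jv_\lambda}{u}$ is a complex bilinear pairing, given by $B^{-1}$ extended $\C$-bilinearly, applied to the complex-linear functionals $X\mapsto\ip{T(X)v_\lambda}{u}$. Because this pairing is $\C$-bilinear, it can legitimately be computed in the complex root basis, and I would state this explicitly. Only one side of each pair $E_{\pm\alpha}$ needs to annihilate $v_\lambda$, and that is what makes every cross term vanish.
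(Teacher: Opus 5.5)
Your proof is correct, but for \eqref{eq:square} you take a different route from the paper. The paper never introduces root vectors. It observes that $v_g\otimes v_g$ lies in the Cartan component $V_{2\lambda}\subset V_\lambda^{\otimes2}$, because the $2\lambda$ weight space is one-dimensional and $V_{2\lambda}$ is $G$-invariant. It writes $\Omega=\sum_jT_j\otimes T_j$ as half the difference between the Casimir of $\pi\otimes\pi$ and $2(\lambda,\lambda+2\rh)\Id$, and reads off $\Omega(v_g\otimes v_g)=\norm{\lambda}^2v_g\otimes v_g$ from the eigenvalue $(2\lambda,2\lambda+2\rh)$ on $V_{2\lambda}$. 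Self-adjointness of $\Omega$ then gives the identity for every $g$ at once, with no reduction to $g=e$.

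You prove the same fact, $\Omega(v_\lambda\otimes v_\lambda)=\norm{\lambda}^2v_\lambda\otimes v_\lambda$ in effect, by direct computation in the complexified root basis. The raising operators annihilate $v_\lambda$, and $T(E_\alpha)^*\propto T(E_{-\alpha})$ kills every cross term, so only the Cartan part survives. You then transport from $e$ to $g$ using invariance of the sum under change of $B$-orthonormal basis.

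The paper's argument is shorter once the Casimir eigenvalue formula is cited, and it avoids all complexification and normalization bookkeeping. Yours is self-contained and needs neither the tensor-square decomposition nor a second Casimir eigenvalue. It also shows transparently why no $\rh$-term appears in \eqref{eq:square}: the constants $c_\alpha$ play no role there.

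When you write it up, phrase the functional as $Z\mapsto\ip{v_\lambda}{T(Z)u}$, which is genuinely $\C$-linear on $\gfrak_\C$. The map $Z\mapsto\ip{T(Z)v_\lambda}{u}$, with $T$ extended $\C$-linearly, is conjugate-linear. Your displayed formula with $E_{\pm\alpha}^*$ already handles this correctly.

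For \eqref{eq:casimir} you sketch the standard derivation of the Casimir eigenvalue instead of citing it as the paper does (Knapp, Proposition~5.28). That is fine. The normalization check you defer to $\mathfrak{su}(2)$ is exactly what that citation supplies.
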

\begin{proof}
For an irreducible representation of highest weight $\nu$, the
standard Casimir eigenvalue formula for a $B$-orthonormal compact
basis is (\cite[Proposition~5.28(b), p.~295]{Knapp})
\[
 -\sum_j\mathrm d\pi_\nu(X_j)^2
       =(\nu,\nu+2\rh)\Id.
\]
 Applying it at $\nu=\lambda$ gives
\eqref{eq:casimir}, since $T_j=-\ii\,\mathrm d\pi(X_j)$.

The highest-weight space of weight $2\lambda$ in $V_\lambda^{\otimes2}$
is one-dimensional.  Complete reducibility therefore gives a unique
irreducible summand $V_{2\lambda}$ containing $v_\lambda^{\otimes2}$.
For each $g\in G$, the square $v_g\otimes v_g$ belongs to this
summand by $G$-invariance.  Set
$\Omega=\sum_jT_j\otimes T_j$.  Then \eqref{eq:casimir} gives
\[
 \sum_j(T_j\otimes\Id+\Id\otimes T_j)^2=2(\lambda,\lambda+2\rh)\Id+2\Omega.
\]
On $V_{2\lambda}$, the Casimir eigenvalue formula gives
$(2\lambda,2\lambda+2\rh)\Id$ for this operator, so
$\Omega(v_g\otimes v_g)=\norm{\lambda}^2 v_g\otimes v_g$.  Self-adjointness now gives
\[
 \sum_j\ip{v_g}{T_jv}^{\,2}
 =\ip{v_g\otimes v_g}{\Omega(v\otimes v)}
 =\norm{\lambda}^2\,\ip{v_g}{v}^{\,2}. \qedhere
\]
\end{proof}

\section{Second variations under perturbations}\label{sec:variations}

Fix $v\in V_\lambda$ and we shall use the perturbation $v+tT_jv$. For this, we write
\[
 A=\ip{v_g}{v},\qquad B_j=\ip{v_g}{T_jv},\qquad C_j=\ip{v_g}{T_j^2v}.
\]
Note that
 $|A+tB_j|^2=|A|^2+2t\Re(\overline A B_j)+t^2|B_j|^2$.

\begin{lemma}\label{lem:scalarvariation}
Suppose $\Phi\in C([0,\infty))\cap C^2((0,\infty))$ and
$z\mapsto\Phi(|z|^2)$ is $C^2$ on $\C$. For $A\ne0$,
\begin{align}\label{eq:affineindividual}
 \left.\frac{\mathrm d^2}{\mathrm dt^2}\right|_0\Phi(|A+tB_j|^2)
 &=2\Phi'(|A|^2)|B_j|^2
   +4\Phi''(|A|^2)\bigl(\Re(\overline A B_j)\bigr)^2.
\end{align}
The right-hand side extends continuously across $A=0$. Consequently,
\begin{align}\label{eq:generalvariation}
 &\sum_j\left.\frac{\mathrm d^2}{\mathrm dt^2}\right|_0
 \int_G\Phi(|\ip{v_g}{v+tT_jv}|^2)\dd g\notag\\
 &\qquad=\int_G\left[4\norm{\lambda}^2|A|^4\Phi''(|A|^2)
 +2(\lambda,\lambda+2\rh)|A|^2\Phi'(|A|^2)\right]\dd g.
\end{align}
\end{lemma}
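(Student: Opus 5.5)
The plan has three steps: the pointwise formula \eqref{eq:affineindividual}, then interchanging $\mathrm d^2/\mathrm dt^2$ with $\int_G$, then summing over $j$ and simplifying with Lemma~\ref{lem:casimir} and an integration by parts over $G$.

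\textbf{Pointwise formula.} For $A\ne0$, set $u(t)=|A+tB_j|^2=|A|^2+2t\Re(\overline AB_j)+t^2|B_j|^2$. Then $u'(0)=2\Re(\overline AB_j)$ and $u''(0)=2|B_j|^2$. The chain rule $(\Phi\circ u)''=\Phi''(u)u'^2+\Phi'(u)u''$ gives \eqref{eq:affineindividual} directly.

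The continuous extension across $A=0$ uses the hypothesis that $F(z)=\Phi(|z|^2)$ is $C^2$ on $\C$. The quantity in question is $D^2F(A)[B_j,B_j]$, the second directional derivative of $F$ at $A$ in direction $B_j$. For $A\neq 0$ it equals the right-hand side of \eqref{eq:affineindividual}, and it is continuous in $(A,B_j)$. So the right-hand side extends continuously, and the extension at $A=0$ is $D^2F(0)[B_j,B_j]$. Since $t\mapsto\Phi(|A+tB_j|^2)$ is $C^2$, the formula holds for every $g$.

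\textbf{Interchange of derivative and integral.} $G$ is compact, $A,B_j,C_j$ are continuous in $g$, and $D^2F$ is bounded on compact sets. So $\frac{\mathrm d^2}{\mathrm dt^2}F(A+tB_j)$ is uniformly bounded for $|t|\le1$. Dominated convergence, applied twice, lets us differentiate under $\int_G$ at $t=0$.

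\textbf{Summing over $j$.} By \eqref{eq:affineindividual} the sum splits into two terms.
\begin{itemize}
\item $\sum_j 2\Phi'(|A|^2)|B_j|^2$.
\item $\sum_j 4\Phi''(|A|^2)(\Re\overline AB_j)^2$.
\end{itemize}
For the second, write $(\Re\overline AB_j)^2=\tfrac12|A|^2|B_j|^2+\tfrac12\Re(\overline A^2B_j^2)$. Then \eqref{eq:square} gives $\sum_jB_j^2=\norm\lambda^2A^2$, so
\[
\sum_j(\Re\overline AB_j)^2=\tfrac12|A|^2\sum_j|B_j|^2+\tfrac12\norm\lambda^2|A|^4 .
\]
This produces the term $2\norm\lambda^2|A|^4\Phi''$. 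It leaves the quantity $S:=\sum_j|B_j|^2$ appearing with weight $2\Phi'(|A|^2)+2|A|^2\Phi''(|A|^2)=2\frac{\mathrm d}{\mathrm dx}(x\Phi'(x))|_{x=|A|^2}$.

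\textbf{Handling $S$: the main obstacle.} The key observation is that $B_j$ is a derivative along $G$. Since $\pi(h_j(t))=e^{-\ii tT_j}$, right translation $g\mapsto gh_j(t)$ gives $v_{gh_j(t)}=\pi(g)e^{-\ii tT_j}v_\lambda$. This is not directly $T_jv$, so I would instead use left translation acting on $v$: $\ip{v_{h_j(-t)g}}{v}=\ip{v_g}{e^{-\ii tT_j}v}$, whose $t$-derivative is $-\ii B_j$. Let $\mathcal L_j$ be the corresponding left-invariant-measure vector field on $G$. Then $\mathcal L_jA=-\ii B_j$ and $\mathcal L_j^2A=-C_j$.

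Next, integrate $\Psi(|A|^2)=\int\frac{\mathrm d}{\mathrm dx}(x\Phi')\ldots$ by parts, using the invariance of Haar measure, so that $\int_G\mathcal L_j(\cdot)\dd g=0$. I would apply this to $\mathcal L_j$ of $\Psi(|A|^2)\,\overline{A}\,\mathcal L_jA$, with $\Psi(x)=\Phi'(x)$, roughly. We have $\mathcal L_j|A|^2=2\Re(\overline A(-\ii B_j))$, and $\sum_jC_j=(\lambda,\lambda+2\rh)A$ by \eqref{eq:casimir}. The resulting identity should be
\[
\int 2(x\Phi')'|_{|A|^2}S=\int\bigl[2(\lambda,\lambda+2\rh)|A|^2\Phi'+2\norm\lambda^2|A|^4\Phi''\bigr].
\]
Adding the two terms then gives \eqref{eq:generalvariation}.

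The delicate points are getting the signs and real parts right: the imaginary parts from $\mathcal L_j|A|^2$ must pair with \eqref{eq:square} again. The other concern is justifying the integration by parts near $A=0$, where $\Phi''$ may blow up. For this I would use the continuity established in the pointwise step, or regularize $\Phi$ and pass to the limit.
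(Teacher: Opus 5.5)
Your proposal is correct and is essentially the paper's argument. It is the alternative route of Remark~\ref{rem:representation}: sum \eqref{eq:affineindividual}, then remove $\sum_j|B_j|^2$ using the weighted identity \eqref{eq:weighted}, proved by integration by parts as in Appendix~\ref{app:representation}. Your deferred computation also closes. The real part of your test function $\Phi'(|A|^2)\overline A\,\mathcal L_jA$ is $\tfrac12\mathcal L_j\Phi(|A|^2)$, so your identity $\int_G\bigl[2\Phi''(|A|^2)(\Im(\overline AB_j))^2+\Phi'(|A|^2)(|B_j|^2-\Re(\overline AC_j))\bigr]\dd g=0$ is exactly half of the main-text relation $U_j''(0)=0$. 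The signs then work out as you anticipated, via $(\Im z)^2=\tfrac12(|z|^2-\Re(z^2))$ and \eqref{eq:square}. The only difference from the main-text proof is bookkeeping: the paper subtracts this identity for each $j$ before summing, while you sum first and then eliminate $\sum_j|B_j|^2$.
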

\begin{proof}
The first formula is the ordinary chain rule.
The $C^2$ hypothesis on the radial function implies that
$\Phi'(x)$ has a finite limit and $x\Phi''(x)\to0$ as $x\downarrow0$,
by comparing its radial and tangential second derivatives.
This justifies the asserted extension.

For the integrated variation, the invariance of Haar measure gives
\[
 U_j(t):=\int_G\Phi(|\ip{v_g}{e^{\ii tT_j}v}|^2)\dd g
        =\int_G\Phi(|A|^2)\dd g.
\]
Indeed, $\ip{v_g}{e^{\ii tT_j}v}=\ip{v_{h_j(t)g}}v$. So $U_j''(0)=0$.
Since
\[
 \ip{v_g}{e^{\ii tT_j}v}
 =A+\ii tB_j-\frac{t^2}{2}C_j+o(t^2),
\]
\[
 |\ip{v_g}{e^{\ii tT_j}v}|^2
 =|A|^2-2t\Im(\overline A B_j)
 +t^2\bigl(|B_j|^2-\Re(\overline A C_j)\bigr)+o(t^2),
\]
the chain rule gives
\[
 0=U_j''(0)=\int_G\left[
 2\Phi'(|A|^2)\bigl(|B_j|^2-\Re(\overline A C_j)\bigr)
 +4\Phi''(|A|^2)\bigl(\Im(\overline A B_j)\bigr)^2\right]\dd g.
\]
Subtract this from the integral of \eqref{eq:affineindividual} and
the $|B_j|^2$ terms cancel. Using
$2(\Re z)^2=|z|^2+\Re(z^2)$, or equivalently
$(\Re z)^2-(\Im z)^2=\Re(z^2)$, the remaining integrand is
\[
 4\Phi''(|A|^2)\Re(\overline A^{\,2}B_j^2)
 +2\Phi'(|A|^2)\Re(\overline A C_j).
\]
This difference extends continuously by zero at $A=0$.
Summing over $j$ and using
\[
 \sum_jB_j^2=\norm{\lambda}^2 A^2,\qquad
 \sum_jC_j=(\lambda,\lambda+2\rh)A
\]
from Lemma~\ref{lem:casimir} gives \eqref{eq:generalvariation} after integration.
All differentiations under the integral are justified by compactness
and the $C^2$ hypothesis.
\end{proof}

\begin{remark}\label{rem:representation}
The unitary second variation is used only for cancellation in the proof above.
It can be avoided by summing and integrating \eqref{eq:affineindividual},
then using the Casimir identities and the weighted identity \eqref{eq:weighted}.
Appendix~\ref{app:representation} states this identity and gives two proofs.
\end{remark}

For normalization, put $r_j(t)=\norm{v+tT_jv}^2$.
\begin{lemma}\label{lem:normvariation}
Let $v$ be a unit vector and $\Phi$ be of class $C^2$ near $1$.
For each $j$,
\begin{equation}\label{eq:normindividual}
 \left.\frac{\mathrm d^2}{\mathrm dt^2}\right|_0\Phi(r_j(t))
 =2\norm{T_jv}^2\Phi'(1)+4\ip v{T_jv}^2\Phi''(1).
\end{equation}
Consequently,
\begin{equation}\label{eq:normcomposition}
 \sum_j\left.\frac{\mathrm d^2}{\mathrm dt^2}\right|_0\Phi(r_j(t))
 =2(\lambda,\lambda+2\rh)\Phi'(1)+4\norm{m(v)}^2\Phi''(1).
\end{equation}
\end{lemma}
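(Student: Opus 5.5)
The plan is a direct computation: expand $r_j(t)$ as a quadratic polynomial in $t$, apply the chain rule at $t=0$, and then sum over $j$ using the Casimir identity \eqref{eq:casimir} and the moment formula \eqref{eq:moment}.

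First, I expand the norm. Since $T_j$ is self-adjoint, $\ip{v}{T_jv}$ is real, so
\[
 r_j(t)=\norm{v}^2+2t\Re\ip v{T_jv}+t^2\norm{T_jv}^2
 =1+2t\ip v{T_jv}+t^2\norm{T_jv}^2 .
\]
This gives $r_j(0)=1$, $r_j'(0)=2\ip v{T_jv}$ and $r_j''(0)=2\norm{T_jv}^2$. Because $\Phi$ is $C^2$ near $1$ and $r_j$ is smooth with $r_j(t)$ close to $1$ for small $|t|$, the chain rule applies:
\[
 \left.\frac{\mathrm d^2}{\mathrm dt^2}\right|_0\Phi(r_j(t))
 =\Phi''(1)\,r_j'(0)^2+\Phi'(1)\,r_j''(0).
\]
Substituting the derivatives above yields \eqref{eq:normindividual}.

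For \eqref{eq:normcomposition}, I sum \eqref{eq:normindividual} over $j$ and treat the two terms separately.
\begin{itemize}
\item For the first term, self-adjointness gives $\sum_j\norm{T_jv}^2=\ip v{\sum_jT_j^2v}$. By \eqref{eq:casimir} and $\norm v=1$, this equals $(\lambda,\lambda+2\rh)$.
\item For the second term, $\sum_j\ip v{T_jv}^2=\norm{m(v)}^2$ by \eqref{eq:moment}.
\end{itemize}
Combining the two gives \eqref{eq:normcomposition}.

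There is no real obstacle here. The only point needing care is that $\ip v{T_jv}$ is real, so that the square in \eqref{eq:normindividual} coincides with the moment expression in \eqref{eq:moment}.
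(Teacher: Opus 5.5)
Your proposal is correct and matches the paper's proof step for step: expand $r_j(t)=1+2t\ip v{T_jv}+t^2\norm{T_jv}^2$, apply the chain rule at $t=0$, then sum over $j$ using the Casimir identity \eqref{eq:casimir} and the moment formula \eqref{eq:moment}. Your remark that $\ip v{T_jv}$ is real by self-adjointness is the same observation the paper uses implicitly.
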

\begin{proof}
Expanding $r_j(t)=1+2t\ip v{T_jv}+t^2\norm{T_jv}^2$ gives
$r_j(0)=1$, $r_j'(0)=2\ip v{T_jv}$, and
$r_j''(0)=2\norm{T_jv}^2$. The chain rule proves the first identity.
Summing and using
\[
 \sum_j\norm{T_jv}^2=(\lambda,\lambda+2\rh),\qquad
 \sum_j\ip v{T_jv}^2=\norm{m(v)}^2
\]
proves the second, by Lemma~\ref{lem:casimir} and the definition of $m(v)$.
\end{proof}

\section{Proof of the main results}\label{sec:proof}

We prove Theorem~\ref{thm:main} for two classes of functions separately.

\subsection{The power function \texorpdfstring{$x^p$}{x\textasciicircum p}}\label{sec:powers}

Fix a real $p>1$ and define
\begin{equation}\label{eq:IN}
 I_p(w)=\int_G |\ip{v_g}{w}|^{2p}\dd g,\qquad
 N_p(w)=\norm w^{2p}.
\end{equation}
These functions are homogeneous of degree $2p$, and $I_p$ is
$G$-invariant and phase-invariant.

For fixed $p>1$, a vector $v$, and an index $j$, use $r_j$ from
Section~\ref{sec:variations} and set
\begin{equation}\label{eq:scalarcurves}
 H_j(t)=I_p(v+tT_jv),\qquad n_j(t)=r_j(t)^p.
\end{equation}

\begin{lemma}\label{lem:powerregularity}
The functions $H_j$ and $t\mapsto I_p(e^{\ii tT_j}v)$ are $C^2$.
Their first two derivatives can be taken under the integral.
\end{lemma}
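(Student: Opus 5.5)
The plan is a standard differentiation-under-the-integral argument: the integrand is a $C^2$ function of $t$ for each fixed $g$, with first and second $t$-derivatives bounded uniformly in $g\in G$ and $|t|\le1$. Both curves have the form $t\mapsto F(z_g(t))$, where $F(z)=|z|^{2p}$ on $\C$ and $z_g(t)$ is $\ip{v_g}{v+tT_jv}$ or $\ip{v_g}{e^{\ii tT_j}v}$. Since $F(z)=(z\bar z)^p$ with $p>1$, $F$ is $C^2$ on $\C$ as a real function of two variables. Away from $0$ this is clear. At $0$:
\begin{itemize}
\item the first real derivatives are $O(|z|^{2p-1})$, which tend to $0$;
\item the second real derivatives are $O(|z|^{2p-2})$, which also tend to $0$ because $2p-2>0$.
\end{itemize}
Hence the derivatives extend continuously by $0$, and $F\in C^2(\R^2)$. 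This is exactly the hypothesis of Lemma~\ref{lem:scalarvariation} with $\Phi(x)=x^p$.

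Next, the map $(g,t)\mapsto z_g(t)$ is jointly continuous and smooth in $t$, with $t$-derivatives continuous in $(g,t)$:
\begin{itemize}
\item for the affine curve, $z_g'=\ip{v_g}{T_jv}$ and $z_g''=0$;
\item for the unitary curve, $z_g'=\ip{v_g}{\ii T_je^{\ii tT_j}v}$ and $z_g''=-\ip{v_g}{T_j^2e^{\ii tT_j}v}$.
\end{itemize}
By the chain rule, $\partial_t F(z_g(t))$ and $\partial_t^2F(z_g(t))$ are combinations of $DF$ and $D^2F$ evaluated at $z_g(t)$, paired with $z_g'$ and $z_g''$. These are continuous functions on the compact set $G\times[-1,1]$, so they are bounded there.

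Dominated convergence (or the elementary theorem on differentiating parametric integrals with continuous integrands over a compact measure space) then shows that $H_j$ and $t\mapsto I_p(e^{\ii tT_j}v)$ are $C^1$, with derivative equal to the integral of $\partial_t$. Applying the same argument to the derivative gives $C^2$ with the second derivative taken under the integral.

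The only delicate point is the $C^2$ regularity of $|z|^{2p}$ at $z=0$ when $1<p<2$. There the second derivatives behave like $|z|^{2p-2}$, which is continuous but not Lipschitz. This is still enough for $C^2$, and continuity on a compact set supplies the uniform bound needed for dominated convergence.
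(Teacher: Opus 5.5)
Your proposal is correct and follows essentially the same route as the paper. Both arguments show that $|z|^{2p}$ is $C^2$ on $\C$, with its first and second derivatives extending by zero at the origin. Both then apply the chain rule and bound the $t$-derivatives of the integrand uniformly in $g$: you do this via joint continuity on the compact set $G\times[-1,1]$, the paper via explicit uniform bounds on $A, A', A''$. The finiteness of Haar measure then justifies taking both derivatives under the integral.
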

\begin{proof}
The function $g_p(z)=|z|^{2p}$ on $\C\simeq\R^2$ is $C^2$ for
$p>1$, including at zero: its gradient and Hessian are bounded in norm
by $2p|z|^{2p-1}$ and $2p(2p-1)|z|^{2p-2}$, respectively, and
both derivatives extend by zero there.
For a $C^2$ complex-valued function $A(t)$ with $A(t)\ne0$,
\[
 \bigl(|A(t)|^{2p}\bigr)''
 =2p|A|^{2p-2}\bigl(|A'|^2+\Re(\overline A A'')\bigr)
 +4p(p-1)|A|^{2p-4}
                  \bigl(\Re(\overline A A')\bigr)^2.
\]
The second term is bounded in absolute value by
$4p(p-1)|A|^{2p-2}|A'|^2$ and extends continuously by zero at $A=0$.
Take $A(t)=\ip{v_g}{v+tT_jv}$ or $\ip{v_g}{e^{\ii tT_j}v}$.
For small $|t|$, $A,A',A''$ are bounded uniformly in the
group variable $g$, since $\norm{v_g}=1$.  The preceding estimates
and the finiteness of Haar measure justify differentiation under
the integral, also at overlap zeros.
\end{proof}

\begin{proposition}\label{prop:powervariation}
For fixed $p>1$ and $v$, the functions in \eqref{eq:scalarcurves} satisfy
\begin{equation}\label{eq:DI}
 \sum_j H_j''(0)
   =\bigl(2p(\lambda,\lambda+2\rh)+4p(p-1)\norm{\lambda}^2\bigr)I_p(v).
\end{equation}
If $\norm v=1$, then
\begin{equation}\label{eq:DN}
 \sum_j n_j''(0)=2p(\lambda,\lambda+2\rh)+4p(p-1)\norm{m(v)}^2.
\end{equation}
\end{proposition}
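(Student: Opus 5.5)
The plan is to specialize Lemma~\ref{lem:scalarvariation} and Lemma~\ref{lem:normvariation} to $\Phi(x)=x^p$ and then simplify using homogeneity. The regularity needed to apply the first lemma comes from Lemma~\ref{lem:powerregularity}. That lemma shows $z\mapsto|z|^{2p}$ is $C^2$ on $\C$ for $p>1$, and that both $H_j$ and $t\mapsto I_p(e^{\ii tT_j}v)$ can be differentiated twice under the integral. So the hypotheses of Lemma~\ref{lem:scalarvariation} hold with $\Phi(x)=x^p$, which is continuous on $[0,\infty)$ and $C^2$ on $(0,\infty)$.

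For \eqref{eq:DI}, I will substitute $\Phi'(x)=px^{p-1}$ and $\Phi''(x)=p(p-1)x^{p-2}$ into \eqref{eq:generalvariation}, with $A=\ip{v_g}{v}$. The integrand becomes
\[
4\norm{\lambda}^2|A|^4\,p(p-1)|A|^{2p-4}+2(\lambda,\lambda+2\rh)|A|^2\,p|A|^{2p-2}
=\bigl(4p(p-1)\norm{\lambda}^2+2p(\lambda,\lambda+2\rh)\bigr)|A|^{2p}.
\]
This holds pointwise wherever $A\ne0$. At $A=0$ both sides vanish, by the continuous extension asserted in Lemma~\ref{lem:scalarvariation} together with the bounds in Lemma~\ref{lem:powerregularity}. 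Integrating over $G$ gives the constant times $I_p(v)$. The left-hand side of \eqref{eq:generalvariation} is exactly $\sum_jH_j''(0)$, which proves \eqref{eq:DI}. The formula needs no normalization of $v$: both sides are homogeneous of degree $2p$, and Lemma~\ref{lem:scalarvariation} was stated for arbitrary $v$.

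For \eqref{eq:DN}, I will apply \eqref{eq:normcomposition} with the same $\Phi(x)=x^p$, which is smooth near $1$. Since $\Phi'(1)=p$ and $\Phi''(1)=p(p-1)$, and since $n_j(t)=\Phi(r_j(t))$ with $\norm v=1$, the sum $\sum_jn_j''(0)$ equals $2p(\lambda,\lambda+2\rh)+4p(p-1)\norm{m(v)}^2$.

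There is no real obstacle. The one point needing care is behaviour at the overlap zeros $A=0$ when $1<p<2$: there $|A|^{2p-4}$ is singular, but multiplied by $|A|^4$ it is harmless. This has already been handled by Lemma~\ref{lem:powerregularity} and the extension statement in Lemma~\ref{lem:scalarvariation}, so the proposition is a direct substitution.
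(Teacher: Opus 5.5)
Your proposal is correct and follows the paper's proof: you use Lemma~\ref{lem:powerregularity} to verify the hypotheses of Lemma~\ref{lem:scalarvariation}, substitute $\Phi'(x)=px^{p-1}$ and $\Phi''(x)=p(p-1)x^{p-2}$ into \eqref{eq:generalvariation} to get \eqref{eq:DI}, and obtain \eqref{eq:DN} from \eqref{eq:normcomposition} with $\Phi'(1)=p$ and $\Phi''(1)=p(p-1)$. Your remarks on the overlap zeros and on \eqref{eq:DI} not requiring $\norm v=1$ are also accurate.
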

\begin{proof}
Lemma~\ref{lem:powerregularity} verifies the regularity hypothesis of
Lemma~\ref{lem:scalarvariation} for $\Phi(x)=x^p$.
Substituting $\Phi'(x)=px^{p-1}$ and
$\Phi''(x)=p(p-1)x^{p-2}$ in \eqref{eq:generalvariation} gives
\[
 \sum_jH_j''(0)
 =\bigl(4p(p-1)\norm{\lambda}^2+2p(\lambda,\lambda+2\rh)\bigr)I_p(v).
\]
For unit $v$, apply Lemma~\ref{lem:normvariation} to the same scalar
function. Since $\Phi'(1)=p$ and $\Phi''(1)=p(p-1)$,
\eqref{eq:normcomposition} gives \eqref{eq:DN} directly.
\end{proof}

Now we are ready to prove the main theorem for the power function.

\begin{proof}[Proof of Theorem~\ref{thm:main}: the power case]
By compactness, $I_p$ attains its maximum $M$ at a unit vector $w_*$.
The value $M$ is positive because $\int_G Q_{w_*}(g)\dd g=1/d$.
Use \eqref{eq:scalarcurves} with $v=w_*$.  For each fixed index $j$, set
\[
 \phi_j(t)=H_j(t)-M n_j(t)
       =I_p(w_*+tT_jw_*)-M\norm{w_*+tT_jw_*}^{2p}.
\]
Homogeneity and maximality give $\phi_j(t)\le0$ for every real $t$,
with $\phi_j(0)=0$.  Hence $\phi_j''(0)\le0$.  Summing over $j$ and
using \eqref{eq:DI} and \eqref{eq:DN}, we obtain
\[
 0\ge\sum_j \phi_j''(0)
 =4p(p-1)M\bigl(\norm{\lambda}^2-\norm{m(w_*)}^2\bigr)\ge0.
\]
Lemma~\ref{lem:moment} gives $\norm{m(w_*)}^2=\norm{\lambda}^2$, so
$w_*=e^{\ii\theta}v_h$ for some $\theta\in\R$ and $h\in G$.
All coherent unit vectors have the same value by invariance, and
therefore $M=I_p(v_\lambda)$.  Every equality vector is a maximizer,
so the same argument makes it coherent.  Conversely, every coherent
vector gives equality.
\end{proof}

\subsection{The logarithmic function \texorpdfstring{$x\log x$}{x log x}}\label{sec:shannon}

For $w\ne0$, define
\begin{equation}\label{eq:logfunctional}
 F(w)=\int_G|\ip{v_g}{w}|^2\log|\ip{v_g}{w}|^2\dd g.
\end{equation}
The resolution of identity gives $\int_G|\ip{v_g}{w}|^2\dd g=\norm w^2/d$, hence
\begin{equation}\label{eq:normalizedlog}
 F\!\left(\frac{w}{\norm w}\right)
 =\frac{F(w)}{\norm w^2}-\frac1d\log\norm w^2.
\end{equation}
The function $x\log x$ is not twice differentiable at zero.
Lemma~\ref{lem:regularity} in Appendix~\ref{app:regularity} justifies
the following differentiation by regularization.

\begin{proposition}\label{prop:variation}
For every unit vector $v$,
\begin{equation}\label{eq:logvariation}
 \sum_j\left.\frac{\mathrm d^2}{\mathrm dt^2}\right|_0F(v+tT_jv)
 =2(\lambda,\lambda+2\rh)F(v)+\frac{2(\lambda,\lambda+2\rh)+4\norm{\lambda}^2}{d}.
\end{equation}
\end{proposition}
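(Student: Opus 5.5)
Formally the identity is \eqref{eq:generalvariation} with $\Phi(x)=x\log x$. Here $\Phi'(x)=\log x+1$ and $\Phi''(x)=1/x$, so the integrand of \eqref{eq:generalvariation} becomes
\[
 4\norm{\lambda}^2|A|^2+2(\lambda,\lambda+2\rh)\bigl(|A|^2\log|A|^2+|A|^2\bigr).
\]
Integrating, and using $\int_G|A|^2\dd g=\norm v^2/d=1/d$ from \eqref{eq:normalization}, gives exactly
\[
 2(\lambda,\lambda+2\rh)F(v)+\frac{2(\lambda,\lambda+2\rh)+4\norm{\lambda}^2}{d}.
\]
So the only real work is to justify applying Lemma~\ref{lem:scalarvariation}, or its conclusion, to a function that is not $C^2$ at $0$. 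Indeed, $z\mapsto|z|^2\log|z|^2$ is $C^1$ on $\C$, but its Hessian blows up like $\log|z|$ at $z=0$.

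I would regularize. Take $\Phi_\varepsilon(x)=x\log(x+\varepsilon)$, or $\Phi_\varepsilon(x)=(x+\varepsilon)\log(x+\varepsilon)$ (they differ only by a linear term, which is harmless since linear terms integrate to explicit norms). For $\varepsilon>0$, $z\mapsto\Phi_\varepsilon(|z|^2)$ is smooth on $\C$, so Lemma~\ref{lem:scalarvariation} applies to
\[
 F_\varepsilon(w)=\int_G\Phi_\varepsilon(|\ip{v_g}{w}|^2)\dd g.
\]
It gives $\sum_j (F_\varepsilon\circ\gamma_j)''(0)$ as the integral of $4\norm{\lambda}^2|A|^4\Phi_\varepsilon''(|A|^2)+2(\lambda,\lambda+2\rh)|A|^2\Phi_\varepsilon'(|A|^2)$, where $\gamma_j(t)=v+tT_jv$. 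As $\varepsilon\downarrow0$ we have $x^2\Phi_\varepsilon''(x)\to x$ and $x\Phi_\varepsilon'(x)\to x\log x+x$, with uniform bounds on $[0,C]$. Dominated convergence on the finite Haar measure then yields the claimed right-hand side in the limit.

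The main obstacle is passing the limit through the second $t$-derivative, i.e.\ showing that $t\mapsto F(\gamma_j(t))$ is $C^2$ near $0$ and that its second derivative at $0$ equals $\lim_\varepsilon(F_\varepsilon\circ\gamma_j)''(0)$. My plan is to show that $(F_\varepsilon\circ\gamma_j)''(t)$ converges uniformly for $|t|$ small. Pointwise in $g$, the second derivative of $\Phi_\varepsilon(|A(t)|^2)$, with $A(t)=\ip{v_g}{\gamma_j(t)}$, is controlled by $C(1+|\log(|A(t)|^2+\varepsilon)|)$ with $C$ uniform, because $A,A',A''$ are uniformly bounded. Uniform integrability in $(t,\varepsilon)$ therefore reduces to a bound on $\int_G|\log|A(t)|^2|^{q}\dd g$ for some $q>1$, uniform in small $t$.

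I would obtain this from the fact that $g\mapsto\ip{v_g}{w}$ is a nonzero matrix coefficient, hence real-analytic on $G$, and $w=\gamma_j(t)\neq0$. I would then use a Łojasiewicz-type or polynomial sublevel-set estimate: $|\{g:|\ip{v_g}{w}|^2<s\}|\le Cs^{\delta}$ with $C,\delta$ uniform over $w$ in a compact set of unit vectors. Perhaps simpler, since each coefficient is a trigonometric polynomial in local torus/Bruhat coordinates, I could bound $\int_G|\ip{v_g}{w}|^{-2\eta}\dd g$ uniformly for some small $\eta>0$. Either estimate gives the needed $L^q$ bound on the logarithm. With it, the derivative formulas converge uniformly in $t$, so $F\circ\gamma_j$ is $C^2$ and the identity follows. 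If uniformity in $w$ proves delicate, I would instead prove the identity only at $t=0$ via second difference quotients: write $F(\gamma_j(t))$ through Taylor's formula with integral remainder for $F_\varepsilon$, and let $\varepsilon\to0$ first for each fixed $t$. This needs only the integrability at the finitely many vectors $\gamma_j(t)$, together with continuity in $t$ of the limiting second-derivative integrand.
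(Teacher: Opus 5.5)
Your reduction to \eqref{eq:generalvariation} with $\Phi(x)=x\log x$, the regularization by $\Phi_\varepsilon$, and the limit of the right-hand sides all match the paper. One slip: $(x+\varepsilon)\log(x+\varepsilon)$ and $x\log(x+\varepsilon)$ differ by $\varepsilon\log(x+\varepsilon)$, which is not linear in $x$. Either choice still works, because the additional terms in $x\Phi_\varepsilon'(x)$ and $x^2\Phi_\varepsilon''(x)$ are bounded by $\varepsilon$.

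The genuine gap is the step you flag yourself: interchanging $\varepsilon\downarrow0$ with $\mathrm d^2/\mathrm dt^2$.
\begin{itemize}
\item Your main route needs $\int_G\bigl|\log|\ip{v_g}{\gamma_j(t)}|^2\bigr|^q\dd g$ bounded uniformly for small $|t|$, i.e.\ a sublevel estimate with constants uniform in $w$. A {\L}ojasiewicz inequality for one fixed analytic function gives no such uniformity. You would need a Remez-type inequality for the whole finite-dimensional space of matrix coefficients, or a stability theorem for $\int_G|f_w|^{-\eta}\dd g$ in analytic families. Such results exist, so this route can be completed with a citation. As written, though, the key input is only asserted, and it is much heavier than necessary.
\item Your fallback does not remove the difficulty. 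The remainder $\int_0^t(t-s)(F_\varepsilon\circ\gamma_j)''(s)\,\mathrm ds$ involves every intermediate vector $\gamma_j(s)$, not finitely many. Identifying the limit of the difference quotients with the value at $s=0$ then needs convergence of $\int_G(\text{limit integrand at }s)\dd g$ as $s\to0$. That is again uniform integrability of $\log|\ip{v_g}{\gamma_j(s)}|^2$ in $s$.
\end{itemize}

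The missing observation is that for $H(t)=F(v+tT_jv)$ you only need differentiability of $H'$ at the single point $t=0$. Then integrability of one logarithm suffices.
\begin{itemize}
\item Since $z\mapsto|z|^2\log|z|^2$ is $C^1$, you have $H'(t)=2\Re\int_G\overline{L(A+tB_j)}\,B_j\dd g$ for every $t$, where $L(z)=z(1+\log|z|^2)$.
\item The elementary bound $|L(z+h)-L(z)|\le C|h|(1+|\log|z||)$ on bounded disks, applied at the base point $z=A$ with $h=tB_j$, dominates the integrand of $(H'(t)-H'(0))/t$ by $C|B_j|^2(1+|\log|A||)$.
\item This majorant is integrable because $\log|A|^2\in L^1(G)$ for the single nonzero real-analytic function $A=\ip{v_g}{v}$ on the compact connected group. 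Dominated convergence then gives $H''(0)$.
\item The same majorant, together with $|1+\log(|A|^2+\varepsilon)|\le C+|\log|A|^2|$ for $0<\varepsilon\le1$, gives convergence of the regularized second derivatives at $t=0$.
\end{itemize}
This is Lemma~\ref{lem:regularity}. It needs neither $C^2$ regularity near $0$ nor any uniformity in $w$.
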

\begin{proof}
Set $\Phi_\varepsilon(x)=(x+\varepsilon)\log(x+\varepsilon)$ and
$F_\varepsilon(w)=\int_G\Phi_\varepsilon(|\ip{v_g}{w}|^2)\dd g$, where $\varepsilon>0$.
Its scalar derivatives are
\[
 \Phi_\varepsilon'(x)=1+\log(x+\varepsilon),\qquad
 \Phi_\varepsilon''(x)=\frac1{x+\varepsilon}.
\]
Apply Lemma~\ref{lem:scalarvariation} to obtain
\begin{equation}\label{eq:regularizedvariation}
 \sum_j\left.\frac{\mathrm d^2}{\mathrm dt^2}\right|_0F_\varepsilon(v+tT_jv)
 =\int_G\left[\frac{4\norm{\lambda}^2Q_v^2}{Q_v+\varepsilon}
 +2(\lambda,\lambda+2\rh)Q_v\bigl(1+\log(Q_v+\varepsilon)\bigr)\right]\dd g.
\end{equation}
Lemma~\ref{lem:regularity} permits passage to the second derivatives
as $\varepsilon\downarrow0$. The integrands converge uniformly by
\[
 0\le Q_v-\frac{Q_v^2}{Q_v+\varepsilon}\le\varepsilon,
 \qquad
 0\le Q_v\log(Q_v+\varepsilon)-Q_v\log Q_v
 =Q_v\log\!\left(1+\frac{\varepsilon}{Q_v}\right)\le\varepsilon.
\]
The products involving $Q_v$ are interpreted as zero at $Q_v=0$.
Since $\int_GQ_v\dd g=1/d$, the limit is
\[
 2(\lambda,\lambda+2\rh)F(v)
 +\frac1d\bigl(4\norm{\lambda}^2+2(\lambda,\lambda+2\rh)\bigr),
\]
which is \eqref{eq:logvariation}.
\end{proof}

Now we are ready to prove the main theorem for the logarithmic function.
The inequality also follows directly from the power function result by a limit argument.

\begin{proof}[Proof of Theorem~\ref{thm:main}: the logarithmic case]
Continuity of $x\log x$ on $[0,1]$ makes $F$ continuous on the compact
unit sphere. Let $w_*$ be a maximizer and write $M=F(w_*)$.
For each fixed $j$, put
\[
 H_j(t)=F(w_*+tT_jw_*),\qquad r_j(t)=\norm{w_*+tT_jw_*}^2,
\]
and evaluate the functional under this perturbation, with normalization,
\begin{equation}\label{eq:normalizedlogcurve}
 f_j(t)=F\!\left(\frac{w_*+tT_jw_*}{\norm{w_*+tT_jw_*}}\right)
       =\frac{H_j(t)}{r_j(t)}-\frac1d\log r_j(t).
\end{equation}
We shall show that its second variations force
$\norm{m(w_*)}^2=\norm{\lambda}^2$.
By maximality and Lemma~\ref{lem:regularity},
$f_j'(0)=0$ and $f_j''(0)\le0$. Since $H_j(0)=M$ and $r_j(0)=1$,
the first derivative gives
\[
 H_j'(0)=\left(M+\frac1d\right)r_j'(0).
\]
Differentiating \eqref{eq:normalizedlogcurve} a second time and using
this relation yields
\begin{equation}\label{eq:normalizedlogsecond}
 f_j''(0)=H_j''(0)-\left(M+\frac1d\right)r_j''(0)
                   -\frac{(r_j'(0))^2}{d}.
\end{equation}
Summing \eqref{eq:normalizedlogsecond} and using
Lemma~\ref{lem:normvariation} for the two norm terms and
Proposition~\ref{prop:variation} for $\sum_jH_j''(0)$, we obtain
\begin{equation}\label{eq:entropycontact}
 0\ge\sum_jf_j''(0)
   =\frac4d\bigl(\norm{\lambda}^2-\norm{m(w_*)}^2\bigr)\ge0.
\end{equation}
Lemma~\ref{lem:moment} therefore makes $w_*$ coherent.
Haar invariance gives the same value for all coherent unit vectors,
so $M=F(v_\lambda)$. Every equality vector is a maximizer and is
therefore coherent; conversely, every coherent vector gives equality.
This proves \eqref{eq:functional} for $\Phi(x)=x\log x$.
\end{proof}

\appendix
\section{Regularity of the logarithmic variation}\label{app:regularity}

Since the perturbation $w+tT_jw$ has zero second derivative, no
acceleration term occurs in the logarithmic second variation. The only issue is differentiation at overlap zeros.
\begin{lemma}\label{lem:log-regularity}\label{lem:regularity}
Fix $w\ne0$ and $j$, and put $A=\ip{v_g}w$, $B_j=\ip{v_g}{T_jw}$.
The function $H(t)=F(w+tT_jw)$ is twice differentiable at zero, with
\begin{equation}\label{eq:logaffineindividual}
 H''(0)=\int_G\left[2(1+\log|A|^2)|B_j|^2
       +4\frac{(\Re(\overline A B_j))^2}{|A|^2}\right]\dd g.
\end{equation}
Moreover, for $F_\varepsilon$ in the proof of
Proposition~\ref{prop:variation},
$\left.\frac{\mathrm d^2}{\mathrm dt^2}\right|_0F_\varepsilon(w+tT_jw)
\to H''(0)$ as $\varepsilon\downarrow0$.
\end{lemma}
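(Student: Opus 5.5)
The plan is to pass from the regularized functionals $F_\varepsilon$ to $F$ using a dominated-convergence argument for second difference quotients. The main analytic difficulty is that $x\log x$ has unbounded second derivative at $0$, so the pointwise second derivative of the integrand may fail to exist on the zero set of $A(t)$.

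Write $A(t)=A+tB_j$, which is affine in $t$, and $h(t,g)=|A(t)|^2\log|A(t)|^2$. Set $h_\varepsilon(t,g)=\Phi_\varepsilon(|A(t)|^2)$.

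\textbf{Regularized derivatives.} Since $A''=0$, the chain rule gives
\[
\partial_t^2 h_\varepsilon
=2\bigl(1+\log(|A(t)|^2+\varepsilon)\bigr)|B_j|^2
+\frac{4(\Re(\overline{A(t)}B_j))^2}{|A(t)|^2+\varepsilon}.
\]
The second term is bounded by $4|B_j|^2$ uniformly in $\varepsilon$, $t$ and $g$. For $F_\varepsilon$ everything is smooth and uniformly bounded on compact $t$-intervals, so differentiation under the integral is legitimate, and $\frac{\mathrm d^2}{\mathrm dt^2}F_\varepsilon(w+tT_jw)=\int_G\partial_t^2h_\varepsilon\dd g$.

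\textbf{Logarithmic singularity.} The only unbounded part is $|B_j|^2\log(|A(t)|^2+\varepsilon)$, which is bounded above but can tend to $-\infty$. To control it, I will show that $\int_G\sup_{|t|\le1}\bigl|\log|A(t,g)|^2\bigr|\dd g<\infty$, or at least that $\log|A(t)|^2$ is integrable uniformly in $t$.

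\begin{itemize}
\item[(a)] The key input is that $g\mapsto\ip{v_g}{u}$ is a nonzero real-analytic function on $G$ for $u\ne0$. It is a matrix coefficient, and it is nonzero because $\int|\cdot|^2=\norm u^2/d$. Such functions have $\log|\cdot|$ locally integrable; this is the Łojasiewicz-type fact that analytic functions admit $|f|^{-\delta}\in L^1$ for small $\delta$.
\item[(b)] Uniformity in $t$: $(t,g)\mapsto A+tB_j$ is real-analytic on $[-1,1]\times G$ and does not vanish identically. Applying the same fact on the compact product gives $\int\!\int|A(t,g)|^{-2\delta}\dd g\dd t<\infty$.
\item[(c)] Better, I expect $\sup_t\int|A(t)|^{-2\delta}\dd g<\infty$, obtained from a uniform Łojasiewicz/Bernstein inequality for the finite-dimensional family $\{\ip{v_g}{u}:\norm u=1\}$. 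Compactness of the unit sphere together with lower semicontinuity of $u\mapsto\int|\ip{v_g}u|^{-2\delta}$ should make this plausible. This is the step I expect to be the main obstacle; if it fails, I would fall back on the $t$-averaged bound in (b).
\end{itemize}

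\textbf{Passing to the limit.} I will use the Taylor representation
\[
F_\varepsilon(w+tT_jw)=F_\varepsilon(w)+tF_\varepsilon'(0)+\int_0^t(t-s)\int_G\partial_s^2h_\varepsilon\dd g\dd s.
\]
The integrands $\partial_s^2h_\varepsilon$ are dominated by $C\bigl(1+|\log|A(s,g)|^2|\bigr)$, which is integrable over $(s,g)$ by (b). The terms $F_\varepsilon(w+tT_jw)$ converge to $H(t)$ uniformly in $t$, and the first-derivative terms converge as well, by the uniform estimates $0\le\dots\le\varepsilon$ as in Proposition~\ref{prop:variation}. Letting $\varepsilon\downarrow0$ therefore yields
\[
H(t)=H(0)+tH'(0)+\int_0^t(t-s)K(s)\dd s,
\]
where $K(s)$ is the right-hand side of \eqref{eq:logaffineindividual} evaluated at $w+sT_jw$, and $K$ is finite for almost every $s$.

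\textbf{Continuity of $K$ at $0$.} To conclude that $H''(0)=K(0)$ and that the regularized second derivatives converge, I need $K$ to be continuous at $s=0$. The bounded term converges by dominated convergence, off the null set $\{A=0\}$. The logarithmic term requires the uniform integrability from (c); with (c), Vitali's theorem gives continuity. The same domination, applied at $s=0$ and combined with monotone convergence in $\varepsilon$ of $\log(|A|^2+\varepsilon)$, gives $\frac{\mathrm d^2}{\mathrm dt^2}\big|_0F_\varepsilon\to H''(0)$. The convergence of the ratio term is again dominated by $4|B_j|^2$.
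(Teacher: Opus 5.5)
There is a genuine gap, and it is the step you flagged. The lemma needs $H''$ at the specific point $t=0$, since the main argument uses $f_j''(0)\le0$ at a maximizer. Your Taylor-formula argument with the $(s,g)$-integrable domination from (b) only gives $H'(t)=H'(0)+\int_0^tK(s)\dd s$ with $K\in L^1_{\mathrm{loc}}$. Hence $H''=K$ at Lebesgue points of $K$, and you get nothing at $t=0$ in particular. So the fallback to (b) does not prove the lemma, and everything rests on (c).

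But (c) is not established. The heuristic you offer does not work. Lower semicontinuity (Fatou) of $u\mapsto\int_G|\ip{v_g}{u}|^{-2\delta}\dd g$ on the compact sphere only guarantees that this function attains its minimum; it gives no upper bound. Moreover, the {\L}ojasiewicz exponent $\delta$ in (a) depends on the function, so (a) does not even give finiteness for a common $\delta$ along the family. A statement like (c) is plausibly true, but it would need a uniform sublevel-set (Remez/{\L}ojasiewicz-type) estimate for the finite-dimensional space of matrix coefficients. That is a much deeper input than (a), and you do not supply it.

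The missing idea is that no uniformity in $t$ is needed. Since $z\mapsto|z|^2\log|z|^2$ is $C^1$, you can differentiate once under the integral:
$H'(t)=2\Re\int_G\overline{L(A+tB_j)}\,B_j\dd g$, where $L(z)=z(1+\log|z|^2)$ and $L(0)=0$.

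Then prove the elementary increment bound $|L(z+h)-L(z)|\le C|h|(1+|\log|z||)$ for $z\ne0$ on bounded disks:
\begin{itemize}
\item when $|h|\le|z|/2$, integrate the real derivative of $L$ along the segment;
\item when $|z|<2|h|$, use the crude bound $|L(\zeta)|\le C|\zeta|(1+|\log|\zeta||)$ at both endpoints.
\end{itemize}

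This dominates the difference quotient of the integrand of $H'$ at $t=0$ pointwise by $C|B_j|^2(1+|\log|A||)$. That bound involves only the base point $A=\ip{v_g}{w}$, so it is integrable by (a) alone, and dominated convergence gives the asserted formula for $H''(0)$.

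Your final step is fine once $H''(0)$ is identified this way. It shows the regularized second derivatives at $t=0$ converge, using domination by $C+|\log|A|^2|$ for the logarithmic term and by $4|B_j|^2$ for the ratio term.
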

\begin{proof}
The real-analytic function $|A|^2$ is not identically zero, since
$\int_G|A|^2\dd g=\norm w^2/d>0$. Connectedness, compactness, and
\cite[Proposition~3.1]{ShiZhang} imply $\log|A|^2\in L^1(G)$;
its zero set has Haar measure zero.

Set $L(z)=z(1+\log|z|^2)$ for $z\ne0$, and $L(0)=0$.
On each bounded disk,
\begin{equation}\label{eq:logincrement}
 |L(z+h)-L(z)|\le C|h|(1+|\log|z||),\qquad z\ne0.
\end{equation}
If $|h|\le |z|/2$, integrate the ordinary real derivatives of $L$
along the segment, where their size is bounded by
$C(1+|\log|z||)$. Otherwise $|z|<2|h|$ and $|z+h|<3|h|$;
the bound follows from $|L(\zeta)|\le C|\zeta|(1+|\log|\zeta||)$
and boundedness of $s|\log s|$ for $0\le s\le3$ after scaling by $|h|$.

Since $z\mapsto|z|^2\log|z|^2$ is $C^1$, compactness gives
\[
 H'(t)=2\Re\int_G\overline{L(A+tB_j)}B_j\dd g.
\]
By \eqref{eq:logincrement}, the difference quotient of this integrand
is bounded by $C|B_j|^2(1+|\log|A||)$, an integrable function.
Dominated convergence therefore gives \eqref{eq:logaffineindividual}.
We set the integrand to zero on the null set $\{A=0\}$.

For the regularized derivative, replace $\log|A|^2$ by
$\log(|A|^2+\varepsilon)$ and the denominator by
$|A|^2+\varepsilon$. For $0<\varepsilon\le1$, use
\[
 \frac{(\Re(\overline A B_j))^2}{|A|^2+\varepsilon}\le |B_j|^2,
 \qquad |1+\log(|A|^2+\varepsilon)|\le C+|\log|A|^2|.
\]
Dominated convergence proves the asserted limit.
\end{proof}

\section{Weighted identities and two proofs}\label{app:representation}

For $v\in V_\lambda$, use the notation $A=\ip{v_g}{v}$ and
$B_j=\ip{v_g}{T_jv}$ from Section~\ref{sec:variations}.
Under the hypotheses of Lemma~\ref{lem:scalarvariation}, the weighted identity is
\begin{align}\label{eq:weighted}
 &\int_G\bigl(\Phi'(|A|^2)+|A|^2\Phi''(|A|^2)\bigr)
                    \sum_j|B_j|^2\dd g\notag\\
 &\qquad=\norm{\lambda}^2\int_G|A|^4\Phi''(|A|^2)\dd g
      +(\lambda,\lambda+2\rh)\int_G|A|^2\Phi'(|A|^2)\dd g.
\end{align}
The integrands are interpreted by continuous extension at $A=0$.
Taking $\Phi(x)=x^p$ and dividing by $p$ gives, for every real $p>1$,
\begin{equation}\label{eq:weightedpower}
 p\int_G|A|^{2p-2}\sum_j|B_j|^2\dd g
       =(\lambda,p\lambda+2\rh)\int_G|A|^{2p}\dd g.
\end{equation}
We give two proofs of these identities below.

\subsection{Representation theory}

The Hermitian generators of
$\overline\pi$ are $-\overline{T_j}$, so
\begin{align*}
 \sum_j|B_j|^2
 &=\left\langle v_g\otimes\overline{v_g},
   \left(\sum_jT_j\otimes\overline{T_j}\right)
   (v\otimes\overline v)\right\rangle,\\
 \sum_jT_j\otimes\overline{T_j}
 &=(\lambda,\lambda+2\rh)\Id
   -\frac12\sum_j
      (T_j\otimes\Id-\Id\otimes\overline{T_j})^2.
\end{align*}
The squared operator sum is the Casimir of
$\pi\otimes\overline\pi$; on each irreducible summand with highest
weight $\nu$, it acts by $(\nu,\nu+2\rh)$.

The pointwise sum generally involves several irreducible components.
Its weighted integral can nevertheless be evaluated using ordinary
integer tensor powers. We use the following
formulas  in \cite[Section~III, equations~(14) and~(23)]{Sugita},
with our normalization of Haar measure. Assume first that $\norm v=1$. For each integer
$n\ge1$, let $P_n$ be the orthogonal projection onto the Cartan
component $V_{n\lambda}\subset V_\lambda^{\otimes n}$ generated by
$v_\lambda^{\otimes n}$, and put
\[
 T_j^{(n)}=\sum_{\ell=1}^n
 \Id^{\otimes(\ell-1)}\otimes T_j\otimes\Id^{\otimes(n-\ell)}.
\]
The projection $P_n$ commutes with these total generators, and
\[
 \ip{v_g^{\otimes n}}{T_j^{(n)}v^{\otimes n}}=nA^{n-1}B_j,\qquad
 \int_G\proj{v_g^{\otimes n}}\dd g
   =\frac{P_n}{\dim V_{n\lambda}}.
\]
The second identity is Schur's lemma, and also gives
$I_n(v)=\norm{P_nv^{\otimes n}}^2/\dim V_{n\lambda}$.
Here the same integral definition gives $I_1(v)=\int_G|A|^2\dd g=1/d$.

The Casimir eigenvalue on $V_{n\lambda}$ now yields
\begin{align*}
 n^2\int_G|A|^{2n-2}\sum_j|B_j|^2\dd g
 &=\frac{\sum_j\norm{T_j^{(n)}P_nv^{\otimes n}}^2}
         {\dim V_{n\lambda}}\\
 &=\frac{(n\lambda,n\lambda+2\rh)\norm{P_nv^{\otimes n}}^2}
         {\dim V_{n\lambda}}\\
 &=(n\lambda,n\lambda+2\rh)I_n(v).
\end{align*}
Thus the desired weighted identity holds for every integer $n\ge1$.
By linearity, these identities imply, for every polynomial $f$,
\begin{align*}
 \int_G f(|A|^2)\sum_j|B_j|^2\dd g
 =\norm{\lambda}^2\int_G|A|^2f(|A|^2)\dd g+2(\lambda,\rh)
       \int_G\left(\int_0^{|A|^2}f(x)\,\mathrm dx\right)\dd g.
\end{align*}
Since $|A|^2\in[0,1]$ and $\sum_j|B_j|^2$ is bounded, both sides
are continuous in the uniform norm of $f$ on $[0,1]$.
Uniform polynomial approximation therefore extends this identity to
every continuous $f$. Taking $f(x)=x^{p-1}$ gives, for every real $p>1$,
\[
 \int_G|A|^{2p-2}\sum_j|B_j|^2\dd g
 =\left(\norm{\lambda}^2+\frac{2(\lambda,\rh)}p\right)I_p(v)
 =\frac{(\lambda,p\lambda+2\rh)}p I_p(v).
\]
For general $\Phi$ satisfying the hypotheses of Lemma~\ref{lem:scalarvariation}, take
$f(x)=\Phi'(x)+x\Phi''(x)$, continuously extended at zero.
Then $\int_0^q f(x)\,\mathrm dx=q\Phi'(q)$, since
$x\Phi'(x)\to0$ at zero. The continuous weighted identity above
is exactly \eqref{eq:weighted}. Rescaling extends it to arbitrary $v$,
by applying the unit-vector identity to $x\mapsto\Phi(\norm v^2x)$;
$v=0$ follows directly. The mixed representation identifies the
pointwise operator, while the Cartan tensor powers evaluate its weighted integral.

\subsection{Integration by parts}
Fix $v$ throughout. Let $\mathcal X_j$ act on scalar functions by
\[
 \mathcal X_j f(g)=\left.\frac{\mathrm d}{\mathrm dt}\right|_0
                         f(\exp_G(tX_j)g),\qquad
 \Delta=\sum_j\mathcal X_j^2.
\]
For the bi-invariant metric given by the negative Killing form,
these vector fields are orthonormal and divergence-free. We have
$\mathcal X_j A=-\ii B_j$ and
$\Delta A=-(\lambda,\lambda+2\rh)A$.
Writing $q=|A|^2$ for this calculation gives
\begin{align*}
 \Delta q&=2\sum_j|B_j|^2-2(\lambda,\lambda+2\rh)q,\\
 |\nabla q|^2&=4\sum_j\bigl(\Im(\overline A B_j)\bigr)^2
       =2\left(q\sum_j|B_j|^2-\norm{\lambda}^2q^2\right).
\end{align*}
Integration by parts on the compact group without boundary gives
\[
 0=\int_G\Delta\Phi(q)\dd g
   =\int_G\left[\Phi'(q)\Delta q+\Phi''(q)|\nabla q|^2\right]\dd g.
\]
Substituting the two scalar identities and dividing by $2$ proves
\eqref{eq:weighted}. At zeros, $\Phi(q)$ is $C^2$ by hypothesis,
and the chain-rule expression extends continuously. In particular,
for $\Phi(x)=x^p$, $p>1$, one may directly use
$0=\int_G\Delta(|A|^{2p})\dd g$.

\newcommand{\etalchar}[1]{$^{#1}$}

\end{document}